\definecolor{medium-gray}{gray}{0.45} 
\numberwithin{equation}{section}
\newtheorem{theorem}{Theorem}[section]
\newtheorem{lemma}{Lemma}[section]
\newtheorem{corollary}{Corollary}[section]
\newcommand{\wishart}[1]{\mathcal{W}_{#1}}
\newcommand{\invwishart}[1]{\mathcal{W}^{-1}_{#1}}
\newcommand{\cholwishart}[1]{\mathrm{chol}\mathcal{W}_{#1}}
\newcommand{\cholinvwishart}[1]{\mathrm{chol}\mathcal{W}^{-1}_{#1}}
\newcommand{\bigo}{\mathcal{O}}
\newcommand{\cholU}{\operatorname{cholU}}
\renewcommand{\det}[1]{\left| {#1} \right|}
\newcommand{\dmeasure}[1]{\left(\dd{#1}\right)}
\newcommand{\norm}[1]{\left\lVert {#1} \right\rVert}
\newcommand{\dd}[1]{\mathrm{d}{#1}}
\newcommand{\tr}{\operatorname{tr}}
\newcommand{\zeromat}[2]{0_{#1,#2}}
\begin{document}

\title[Efficiently generating inverse-Wishart matrices]{Efficiently generating inverse-Wishart matrices and their Cholesky factors}
\author{Seth D. Axen}
\address{University of Tübingen, Germany}
\email{seth@sethaxen.com}

\begin{abstract}
    This paper presents a new algorithm for generating random inverse-Wishart matrices that directly generates the Cholesky factor of the matrix without computing the factorization.
    Whenever parameterized in terms of a precision matrix $\Omega=\Sigma^{-1}$, or its Cholesky factor, instead of a covariance matrix $\Sigma$, the new algorithm is more efficient than the current standard algorithm.
\end{abstract}

\maketitle

\section{Introduction}

The inverse-Wishart distribution is the conjugate prior for the covariance matrix of a multivariate normal distribution if the mean is known.
As a result, it is often used in Bayesian analyses \parencite{gelman_bayesian_2013,hoff_multivariate_2009}.
It is derived from the Wishart distribution, which is the conjugate prior for the precision matrix of a multivariate normal distribution if the mean is known and is a key object of interest in random matrix theory.

The standard approach for generating a random draw from the inverse-Wishart distribution, termed an inverse-Wishart matrix, first generates a draw from the Wishart distribution, termed a Wishart matrix, and inverts it.
Variants of this approach are implemented in popular Python\footnote{\texttt{scipy.stats.invwishart} (v1.11.3)}, R\footnote{\texttt{LaplacesDemon::rinvwishart} (v16.1.6), \texttt{MCMCpack::riwish} (v1.6-3), and \texttt{mniw::riwish} (v1.0.1)}, and Julia\footnote{\texttt{Distributions.InverseWishart}(v0.25.102)} packages.
We present an alternative algorithm that, for certain parameterizations of the inverse-Wishart distribution, is more efficient than the standard approach.

\subsection*{Key definitions and notation}

Given a positive definite matrix $X$, the upper Cholesky factor of $X$ is the unique upper triangular matrix $U$ with positive diagonal elements such that $X = U^\top U$.
$\det{X}$ indicates the absolute value of the determinant of $X$, while $\norm{X}$ is the Frobenius norm of $X$, i.e. $\norm{X} = \sqrt{\tr(X^\top X)}$.

To perform changes of variables, we will use the exterior product of differentials to derive absolute determinants of Jacobians.
Following \cite{muirhead_aspects_1982}, $\dd{X}$ refers to the matrix of differentials $\dd{x}_{ij}$ of an $m \times m$ positive definite or upper triangular matrix $X$, and $\dmeasure{X}$ refers to the differential form of the Lebesgue measure on the distinct elements of $\dd{X}$, that is,
\[
    \dmeasure{X} \coloneqq \bigwedge_{j=1}^m \bigwedge_{i=1}^j \dd{x}_{ij} = \dd{x}_{11} \wedge \dd{x}_{12} \wedge \dd{x}_{22} \wedge \ldots \wedge \dd{x}_{mm},
\]
where $\wedge$ is the exterior (or wedge) product with the antisymmetric property $\dd{a} \wedge \dd{b} = -(\dd{b} \wedge \dd{a})$ so that $\dd{a} \wedge \dd{a} = 0$.

\subsection*{Change of variables}

Let $X = f(Y)$, where $f$ is a diffeomorphic (smooth, one-to-one function) function.
Then $\dmeasure{X} = \det{J_f} \dmeasure{Y}$, where $J_f$ is the Jacobian of $f$ at $Y$.
If $p_X(X)$ is an absolutely continuous density with respect to $\dmeasure{X}$, then the change of variables formula gives the corresponding density of $Y$ with respect to $\dmeasure{Y}$ as
\[
    p_Y(Y) = p_X(f(Y)) \det{J_f}.
\]

\subsection*{The Wishart distribution}

Let $Z$ be an $m \times n$ matrix whose $i$th column $Y_i \sim \mathcal{N}(0, \Sigma)$, where the scale matrix $\Sigma$ is an $m \times m$ positive definite matrix.
Then the $m \times m$ positive definite matrix $A = Y Y^\top$ is distributed according to the Wishart distribution $\wishart{m}(n, \Sigma)$, where $n$ is termed the degrees of freedom.
This paper considers only the case where $n > m-1$, so that $A$ is nonsingular.
The density of this Wishart distribution with respect to $\dmeasure{A}$ is \parencite[Theorem 3.2.1]{muirhead_aspects_1982}
\begin{equation}\label{wishartdensity}
    p_A(A \mid n, \Sigma) \propto \frac{}{} \exp\left(-\frac{1}{2}\tr(\Sigma^{-1} A)\right) \det{A}^{(n-m-1)/2}.
\end{equation}
The normalization constant is known but unimportant for this paper.

\subsection*{The inverse-Wishart distribution}

If $B=A^{-1}$, then $B$ follows the inverse-Wishart distribution with degrees of freedom $n$ and scale matrix $\Omega = \Sigma^{-1}$, written $\invwishart{m}(n, \Omega)$.
The density of this distribution with respect to $\dmeasure{B}$ is
\begin{equation}\label{invwishartdensity}
    p_B(B \mid n, \Omega) \propto \exp\left(-\frac{1}{2}\tr(\Omega B^{-1})\right) \det{B}^{-(n+m+1)/2}.
\end{equation}

The inverse-Wishart distribution may also be parameterized in terms of $\Sigma$.
While the two parameterizations are equivalent, the choice of parameterization can affect the efficiency of both density computation and sampling algorithms.

\subsection*{Summary of contributions}

\sloppy
This paper's main theoretical contribution is \Cref{thm:invwishartbartlett}, which is applied in \Cref{alg:rinvwishartchol} for efficiently generating a random Cholesky factor of a inverse-Wishart matrix without performing a Cholesky factorization.
\Cref{alg:rinvwishartdirect} uses this algorithm to efficiently generate an inverse-Wishart matrix.
In \Cref{algcomp} we compare the new and standard algorithm and recommend which to use depending on the parameterization of the inverse-Wishart distribution.

\section{The Cholesky parameterizations of the Wishart and inverse-Wishart distributions}

It is frequently more convenient to work with Cholesky factors of Wishart- and inverse-Wishart-distributed matrices than the matrices themselves, since from such factors, the matrix inverse and determinant are easily computed.
For example, if the inverse-Wishart distribution is used to draw random covariance matrices to parameterize a multivariate normal distribution, then computing the density of the multivariate normal would generally require a Cholesky factor.
Thus, it is useful to define the corresponding distributions of Cholesky factors.

\subsection*{The Cholesky-Wishart distribution}

\sloppy
Let $U_A$ and $U_\Sigma$ be the upper Cholesky factors of $A$ and $\Sigma$, respectively.
Then $U_A$ is distributed according to the Cholesky-Wishart distribution $\cholwishart{m}(n, U_\Sigma)$.
\Cref{thm:choleskydetjac} allows us to derive the density of this distribution in \Cref{thm:cholwishartdensity}.

\begin{theorem}[{\cite[Theorem 2.1.9]{muirhead_aspects_1982}}]\label{thm:choleskydetjac}
    Let $X$ be an $m \times m$ positive definite matrix and $T$ be its upper Cholesky factor.
    Then
    \begin{equation}\label{eq:choleskydetjac}
        \dmeasure{X} = 2^m \left(\prod_{j=1}^m t_{jj}^{m+1-j} \right) \dmeasure{T}.
    \end{equation}
\end{theorem}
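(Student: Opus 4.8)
The plan is to work directly with the defining relation $X = T^\top T$ and to compute the exterior product $\dmeasure{X}$ in terms of $\dmeasure{T}$ by exhibiting the Jacobian of the map $T \mapsto X$ as triangular in a suitable ordering of the coordinates. First I would write the entries explicitly: since $T$ is upper triangular, $t_{ki} = 0$ for $k > i$, so for $i \le j$ we have $x_{ij} = \sum_{k=1}^{i} t_{ki} t_{kj}$. Differentiating gives, for $i < j$,
\[
    \dd{x}_{ij} = \sum_{k=1}^{i} \left( t_{kj}\,\dd{t}_{ki} + t_{ki}\,\dd{t}_{kj} \right),
\]
while on the diagonal $\dd{x}_{jj} = 2\sum_{k=1}^{j} t_{kj}\,\dd{t}_{kj}$.

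The central observation is that, in the column-by-column ordering used to define $\dmeasure{X}$ and $\dmeasure{T}$, this system is triangular. Processing the entries in the order $x_{11}, x_{12}, x_{22}, x_{13}, \ldots, x_{mm}$, I would argue that when $\dd{x}_{ij}$ is wedged against all previously processed differentials, every term except the one carrying the ``new'' variable $\dd{t}_{ij}$ is annihilated: the differentials $\dd{t}_{ki}$ (for $k \le i$) lie in the earlier column $i < j$ and have already appeared, and the differentials $\dd{t}_{kj}$ with $k < i$ have already appeared earlier within column $j$, so by the antisymmetry $\dd{a} \wedge \dd{a} = 0$ they contribute nothing. What remains is the $k = i$ term, with coefficient $t_{ii}$ for the off-diagonal entries and $2 t_{jj}$ for the diagonal entries. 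Hence the Jacobian is triangular and its absolute determinant is the product of these diagonal coefficients.

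It remains to collect and count the powers. Each of the $m$ diagonal entries contributes a factor $2 t_{jj}$, producing $2^m \prod_{j} t_{jj}$. Each off-diagonal entry $x_{ij}$ with $i < j$ contributes $t_{ii}$; grouping by the row index $i$, there are $m - i$ such entries, giving $\prod_{i} t_{ii}^{m-i}$. Combining the two, the exponent of $t_{jj}$ is $1 + (m - j) = m + 1 - j$, which yields exactly $\dmeasure{X} = 2^m \left( \prod_{j=1}^m t_{jj}^{m+1-j} \right) \dmeasure{T}$. I expect the main obstacle to be the bookkeeping in the triangularity argument, namely making precise that in the chosen ordering only the single new differential $\dd{t}_{ij}$ survives the wedge with the previously processed coordinates, since this is what reduces the full Jacobian determinant to a product of its diagonal entries and is the only place the upper-triangular structure of $T$ is essential.
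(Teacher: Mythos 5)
The paper does not prove this statement itself---it is quoted from Muirhead (Theorem 2.1.9) without proof---so there is no in-paper argument to compare against line by line. Your proof is correct and is the standard one: differentiating $x_{ij}=\sum_{k\le i} t_{ki}t_{kj}$, observing that in the column-major ordering each $\dd{x}_{ij}$ introduces exactly one new differential $\dd{t}_{ij}$ with coefficient $t_{ii}$ (or $2t_{jj}$ on the diagonal), so that antisymmetry of the wedge kills all earlier terms, and then counting exponents correctly to get $2^m\prod_j t_{jj}^{m+1-j}$. This is precisely the exterior-product triangularity technique the paper itself deploys to prove \Cref{thm:triinvjac}, so your argument both fills the cited gap and matches the paper's style.
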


\begin{theorem}[Cholesky-Wishart density]\label{thm:cholwishartdensity}
    The density of $\cholwishart{m}(n, U_\Sigma)$ with respect to $\dmeasure{U_A} $ is
    \begin{equation}\label{eq:cholwishartdensity}
        p_{U_A}(U_A \mid n, U_\Sigma) \propto \exp\left(-\frac{1}{2}\lVert U_A U_\Sigma^{-1} \rVert^2\right) \prod_{j=1}^m (u_A)_{jj}^{n-j}.
    \end{equation}
\end{theorem}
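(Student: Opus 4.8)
\subsection*{Proof proposal}

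The plan is to obtain the Cholesky-Wishart density by transforming the Wishart density \eqref{wishartdensity} under the change of variables $A = U_A^\top U_A$, using the Jacobian supplied by \Cref{thm:choleskydetjac}. Since $U_A$ is the upper Cholesky factor of the positive definite matrix $A$, \Cref{thm:choleskydetjac} gives $\dmeasure{A} = 2^m \left(\prod_{j=1}^m (u_A)_{jj}^{m+1-j}\right)\dmeasure{U_A}$, so the absolute Jacobian determinant of the map $U_A \mapsto A$ is $2^m \prod_{j=1}^m (u_A)_{jj}^{m+1-j}$. By the change of variables formula, the density with respect to $\dmeasure{U_A}$ is then the Wishart density evaluated at $A = U_A^\top U_A$ multiplied by this factor.

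Next I would rewrite the two factors of \eqref{wishartdensity} in terms of $U_A$ and $U_\Sigma$. For the exponential term, writing $\Sigma^{-1} = U_\Sigma^{-1}U_\Sigma^{-\top}$ and $A = U_A^\top U_A$ and applying the cyclic property of the trace yields $\tr(\Sigma^{-1}A) = \tr\!\big((U_A U_\Sigma^{-1})(U_A U_\Sigma^{-1})^\top\big) = \lVert U_A U_\Sigma^{-1}\rVert^2$, which is exactly the Frobenius-norm expression in \eqref{eq:cholwishartdensity}. For the determinant term, since $A = U_A^\top U_A$ we have $\det{A} = \big(\prod_{j=1}^m (u_A)_{jj}\big)^2$, so $\det{A}^{(n-m-1)/2} = \prod_{j=1}^m (u_A)_{jj}^{\,n-m-1}$.

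Finally I would collect the powers of the diagonal entries. Multiplying the determinant contribution $\prod_j (u_A)_{jj}^{\,n-m-1}$ by the Jacobian contribution $\prod_j (u_A)_{jj}^{\,m+1-j}$ adds the exponents to give $\prod_j (u_A)_{jj}^{(n-m-1)+(m+1-j)} = \prod_j (u_A)_{jj}^{\,n-j}$, while the constant $2^m$ is absorbed into the proportionality constant. This recovers \eqref{eq:cholwishartdensity}. No step here is a genuine obstacle; the only point demanding care is the algebraic identity $\tr(\Sigma^{-1}A) = \lVert U_A U_\Sigma^{-1}\rVert^2$, where one must track the transposes and apply the trace's cyclic invariance correctly.
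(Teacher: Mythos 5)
Your proposal is correct and follows essentially the same route as the paper's proof: apply the Jacobian of \Cref{thm:choleskydetjac} to the Wishart density \eqref{wishartdensity}, use the cyclic property of the trace to obtain $\tr(\Sigma^{-1}A) = \lVert U_A U_\Sigma^{-1}\rVert^2$, and combine the exponents $(n-m-1)+(m+1-j)=n-j$ on the diagonal entries. Nothing further is needed.
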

\begin{proof}
    From \Cref{wishartdensity} and \Cref{eq:choleskydetjac}, and using the change of variables formula,
    \begin{align*}
        p_{U_A}(U_A \mid n, U_\Sigma) & \propto \exp\left(-\frac{1}{2}\tr((U_\Sigma^\top U_\Sigma)^{-1} U_A^\top U_A)\right) \det{U_A}^{n-m-1} \prod_{j=1}^m (u_A)_{jj}^{m+1-j} \\
                                      & = \exp\left(-\frac{1}{2}\tr(U_\Sigma^{-1} U_\Sigma^{-\top} U_A^\top U_A)\right) \prod_{j=1}^m (u_A)_{jj}^{n-j}.
    \end{align*}
    The cyclic property of the trace implies
    \[\tr(U_\Sigma^{-1} U_\Sigma^{-\top} U_A^\top U_A) = \tr((U_A U_\Sigma^{-1})^\top U_A U_\Sigma^{-1}) = \lVert U_A U_\Sigma^{-1} \rVert^2.\]
    Substituting this back into the previous equation completes the proof.
\end{proof}

\subsection*{The Cholesky-inverse-Wishart distribution}

Similarly, if $U_B$ and $U_\Omega$ are the upper Cholesky factors of $B$ and $\Omega$, respectively, then $U_B$ is distributed according to the Cholesky-inverse-Wishart distribution $\cholinvwishart{m}(n, U_\Omega)$.

\begin{theorem}[Cholesky-inverse-Wishart density]
    \sloppy{The density of $\cholinvwishart{m}(n, U_\Omega)$ with respect to the measure $\dmeasure{U_B}$ is}
    \begin{equation}\label{cholinvwishartdensity}
        p_{U_B}(U_B \mid n, U_\Omega) \propto \exp\left(-\frac{1}{2}\lVert U_\Omega U_B^{-1} \rVert^2\right) \prod_{j=1}^m (u_B)_{jj}^{-(n+j)}.
    \end{equation}
\end{theorem}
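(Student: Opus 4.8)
The plan is to mirror the proof of \Cref{thm:cholwishartdensity}, applying the change of variables $B = U_B^\top U_B$ directly to the inverse-Wishart density \Cref{invwishartdensity}. By \Cref{thm:choleskydetjac}, the measure transforms as $\dmeasure{B} = 2^m \prod_{j=1}^m (u_B)_{jj}^{m+1-j}\, \dmeasure{U_B}$, so the change-of-variables formula gives $p_{U_B}(U_B \mid n, U_\Omega) \propto p_B(U_B^\top U_B \mid n, \Omega) \prod_{j=1}^m (u_B)_{jj}^{m+1-j}$, absorbing the constant $2^m$ into the proportionality.

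Next I would simplify the two $B$-dependent factors of \Cref{invwishartdensity}. Since $U_B$ is upper triangular with positive diagonal, $\det{B} = \det{U_B}^2 = \prod_{j=1}^m (u_B)_{jj}^2$, so the determinant factor becomes $\det{B}^{-(n+m+1)/2} = \prod_{j=1}^m (u_B)_{jj}^{-(n+m+1)}$. For the trace factor, I would substitute $\Omega = U_\Omega^\top U_\Omega$ and $B^{-1} = U_B^{-1} U_B^{-\top}$ and invoke the cyclic property of the trace, obtaining $\tr(\Omega B^{-1}) = \tr(U_\Omega^\top U_\Omega U_B^{-1} U_B^{-\top}) = \tr\!\left((U_\Omega U_B^{-1})^\top (U_\Omega U_B^{-1})\right) = \norm{U_\Omega U_B^{-1}}^2$.

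The one point that requires care — the main obstacle, though a mild one — is that the roles of the scale and data matrices are swapped relative to the Wishart case: here it is $B$, not the scale matrix, that carries the inverse, so the matrix appearing inside the Frobenius norm is $U_\Omega U_B^{-1}$ rather than the $U_A U_\Sigma^{-1}$ of \Cref{eq:cholwishartdensity}. I would therefore double-check the placement of the transposes when applying the cyclic property, to confirm that the factor whose norm appears is exactly $U_\Omega U_B^{-1}$ and not its inverse or transpose.

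Finally, I would collect the diagonal powers. Combining the Jacobian contribution $(u_B)_{jj}^{m+1-j}$ with the determinant factor $(u_B)_{jj}^{-(n+m+1)}$ yields, for each $j$, the exponent $-(n+m+1) + (m+1-j) = -(n+j)$, so that $\prod_{j=1}^m (u_B)_{jj}^{-(n+j)}$, which together with the exponential factor is precisely \Cref{cholinvwishartdensity}. The only bookkeeping worth verifying explicitly is this exponent cancellation.
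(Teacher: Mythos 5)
Your proposal is correct and is exactly the argument the paper intends: the paper's proof simply says to start from \Cref{invwishartdensity} and \Cref{eq:choleskydetjac} and follow the same steps as \Cref{thm:cholwishartdensity}, which is what you have carried out, with the trace manipulation and the exponent arithmetic $-(n+m+1)+(m+1-j)=-(n+j)$ both checking out.
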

\begin{proof}
    Starting from \Cref{invwishartdensity} and \Cref{eq:choleskydetjac}, the proof follows the same steps as that of \Cref{thm:cholwishartdensity}.
\end{proof}

One could alternatively parameterize the Cholesky-inverse-Wishart distribution in terms of $U_\Sigma$ rather than $U_\Omega$, though this would increase the cost of computing the density.

\section{Generating random inverse-Wishart matrices}

\subsection{The standard approach}

Random Wishart matrices are usually generated using the so-called Bartlett decomposition of the Cholesky factor of a Wishart matrix into the product of two Cholesky factors \parencite{bartlett_xxtheory_1934}:
\begin{theorem}[{\cite[Theorem 3.2.14]{muirhead_aspects_1982}}]\label{thm:wishartbartlett}
    Let $U_A = Z U_\Sigma$.
    Then
    \begin{equation}
        z_{ij} \sim \begin{cases}\mathcal{N}(0, 1) & i < j\\ \chi_{n+1-j} & i = j\end{cases} \iff U_A \sim \cholwishart{m}(n, U_\Sigma).
    \end{equation}
\end{theorem}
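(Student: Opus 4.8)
The plan is to read off the law of $Z$ directly from the Cholesky-Wishart density of \Cref{thm:cholwishartdensity} by a single change of variables. For fixed $U_\Sigma$ the map $Z \mapsto U_A = Z U_\Sigma$ is linear, and since $Z$ and $U_\Sigma$ are both upper triangular so is $U_A$; moreover it is a bijection of the upper triangular matrices with positive diagonal onto themselves, with inverse $U_A \mapsto U_A U_\Sigma^{-1}$. Establishing the equivalence therefore amounts to transporting \Cref{eq:cholwishartdensity} through this diffeomorphism and recognizing the image density as a product of independent normal and chi factors. Because the transformation is bijective, proving either direction automatically proves the other, so the $\iff$ comes for free.

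First I would record two elementary consequences of $U_A = Z U_\Sigma$. Since $U_A U_\Sigma^{-1} = Z$, the exponential factor becomes $\exp\!\left(-\tfrac12 \norm{Z}^2\right)$ with $\norm{Z}^2 = \sum_{i \le j} z_{ij}^2$. Comparing diagonal entries gives $(u_A)_{jj} = z_{jj}(u_\Sigma)_{jj}$, so the factor $\prod_{j} (u_A)_{jj}^{n-j}$ equals $\prod_{j} z_{jj}^{n-j}$ times a constant depending only on $U_\Sigma$. Note also that $(u_\Sigma)_{jj} > 0$ forces $z_{jj} > 0$ exactly when $(u_A)_{jj} > 0$, so the supports match.

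The hard part will be the Jacobian determinant of $Z \mapsto U_A$, and the plan is to exploit triangularity to avoid any large computation. Because $(u_A)_{ij} = \sum_{k=i}^{j} z_{ik}(u_\Sigma)_{kj}$ involves only entries $z_{ik}$ from the same row $i$ of $Z$, the Jacobian is block diagonal with one block per row. Within row $i$ the map from $(z_{ii},\dots,z_{im})$ to $((u_A)_{ii},\dots,(u_A)_{im})$ is triangular with diagonal $\{(u_\Sigma)_{jj}\}_{j=i}^m$, since $(u_A)_{ij}$ carries $z_{ij}$ with coefficient $(u_\Sigma)_{jj}$ and does not involve $z_{ik}$ for $k>j$. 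Equivalently, wedging $\dd{(u_A)_{ij}} = \sum_{k=i}^{j} \dd{z_{ik}}\,(u_\Sigma)_{kj}$ in increasing $j$ within each row, every new differential contributes only its leading coefficient $(u_\Sigma)_{jj}$. Multiplying the block determinants, the factor $(u_\Sigma)_{jj}$ appears once for each row $i \le j$, so $\det{J} = \prod_{j=1}^m (u_\Sigma)_{jj}^{\,j}$, a constant in $Z$. This index-counting step is the one most prone to error and is the main obstacle.

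Finally, substituting these three facts into \Cref{eq:cholwishartdensity} and absorbing every $U_\Sigma$-dependent constant into the proportionality sign leaves
\[
    p_Z(Z) \propto \left(\prod_{i<j} \exp\left(-\tfrac12 z_{ij}^2\right)\right)\left(\prod_{j=1}^m z_{jj}^{\,n-j}\exp\left(-\tfrac12 z_{jj}^2\right)\right), \qquad z_{jj}>0.
\]
The density factorizes over the independent entries of $Z$: each off-diagonal $z_{ij}$ has a standard normal density, while each diagonal $z_{jj}$ has density proportional to $z_{jj}^{\,n-j}\exp(-z_{jj}^2/2)$, which is precisely the $\chi_{n+1-j}$ density (matching the exponent $k-1 = n-j$). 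This is the asserted joint law of $Z$, and since the change of variables is a diffeomorphism the implication holds in both directions, completing the proof.
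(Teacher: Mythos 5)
Your proof is correct. Note, however, that the paper does not actually prove this statement --- it is quoted from Muirhead (Theorem 3.2.14) without proof --- so the relevant comparison is with the paper's proof of its mirror image, \Cref{thm:invwishartbartlett}. Your argument is precisely the analogue of that proof: transport the Cholesky density (\Cref{eq:cholwishartdensity} here, \Cref{cholinvwishartdensity} there) through the map relating $Z$ to the Cholesky factor, and read off the product of independent normal and chi densities. Your case is the easier of the two, because the map $Z \mapsto Z U_\Sigma$ is linear, so its Jacobian is constant in $Z$; this is exactly the content of the paper's \Cref{lem:trimuldetjac}, and your explicit row-by-row computation giving $\det{J} = \prod_{j=1}^m (u_\Sigma)_{jj}^{\,j}$ is a sharpened version of that lemma (the index count is right: $(u_\Sigma)_{jj}$ appears once per row $i \le j$). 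By contrast, the inverse-Wishart analogue needs the nonconstant Jacobian $\det{Z}^{-m-1}$ of triangular inversion from \Cref{thm:triinvjac}, which is why the paper spends its effort there. Your remaining steps --- $U_A U_\Sigma^{-1} = Z$ collapsing the exponential to $\exp(-\tfrac12\norm{Z}^2)$, the diagonal identity $(u_A)_{jj} = z_{jj}(u_\Sigma)_{jj}$, the support check, and matching $z_{jj}^{\,n-j}e^{-z_{jj}^2/2}$ to the $\chi_{n+1-j}$ density --- are all correct and mirror the corresponding manipulations in the paper's proof of \Cref{thm:invwishartbartlett}, so this is a sound self-contained proof of the cited result in the paper's own style.
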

The resulting algorithm from \cite{smith_algorithm_1972} is summarized in \Cref{alg:rwishartchol}.

\begin{algorithm}
    \caption{Generate a Cholesky-Wishart matrix.}\label{alg:rwishartchol}
    \begin{algorithmic}[1]
        \Function{rwishart\_chol}{$m,n,U_\Sigma$} \Comment{$\cholwishart{m}(n, U_\Sigma)$}
        \State $Z \gets \zeromat{m}{m}$
        \For{$j\in 1\ldots m$}
        \For{$i \in 1\ldots j-1$}
        \State Generate $z_{ij} \sim \mathcal{N}(0, 1)$
        \EndFor
        \State Generate $z_{jj} \sim \chi_{n+1-j}$
        \EndFor \Comment{$Z \sim \cholwishart{m}(n, I_m)$}
        \State $U_A \gets Z U_\Sigma$
        \State \textbf{return} $U_A$
        \EndFunction
    \end{algorithmic}
\end{algorithm}

The standard algorithm for generating an inverse-Wishart matrix is to first generate a Wishart matrix using \Cref{alg:rwishartchol} and then compute the matrix inverse.
As noted by \cite{jones_generating_1985}, this is most efficiently computed by inverting the Cholesky factor generated by \Cref{alg:rwishartchol}.
The resulting sampling procedure is summarized in \Cref{alg:rinvwishartindirect}, where $\cholU(X)$ computes the upper Cholesky factor of the positive definite matrix $X$.
It uses \Cref{alg:cholupper} to support all common parameterizations of the inverse-Wishart distribution.

\begin{algorithm}
    \caption{Compute the upper Cholesky factor of a positive definite matrix or its inverse from the matrix or its upper Cholesky factor.}\label{alg:cholupper}
    \begin{algorithmic}[1]
        \Function{cholesky\_upper}{$S,\text{invert},\text{ischolU}$}
        \If{$\text{invert}$}
        \If{$\text{ischolU}$}
        \State $U \gets S$
        \Else
        \State $U \gets \cholU(S)$
        \EndIf
        \State $C \gets U^{-1}$
        \State $P \gets C C^\top$
        \State \textbf{return} $\cholU(P)$
        \Else
        \If{$\text{ischolU}$}
        \State \textbf{return} $S$
        \Else
        \State \textbf{return} $\cholU(S)$
        \EndIf
        \EndIf
        \EndFunction
    \end{algorithmic}
\end{algorithm}

\begin{algorithm}
    \caption{Generate an inverse-Wishart matrix using \Cref{alg:rwishartchol}.}\label{alg:rinvwishartindirect}
    \begin{algorithmic}[1]
        \Function{rinvwishart\_indirect}{$m$,$n$,$S$,$\text{iscov}$,$\text{ischolU}$, $\text{retcholU}$}
        \State $U_\Sigma \gets \textproc{cholesky\_upper}(S, !\mathrm{iscov}, \mathrm{ischolU})$
        \State Generate $U_A \sim \textproc{rwishart\_chol}(m,n,U_\Sigma)$
        \State $V \gets U_A^{-1}$
        \State $B \gets V V^{\top}$
        \If {$\text{retcholU}$}
        \State $T \gets \cholU(B)$
        \State \textbf{return} $T$
        \Else
        \State \textbf{return} $B$
        \EndIf
        \EndFunction
    \end{algorithmic}
\end{algorithm}

\subsection{A new approach}

We present a similar decomposition to Bartlett's for the inverse-Wishart distribution:
\begin{theorem}\label{thm:invwishartbartlett}
    Let $U_B = Z^{-1} U_\Omega$.
    Then
    \begin{equation}
        z_{ij} \sim \begin{cases}\mathcal{N}(0, 1) & i < j\\ \chi_{n-m+j} & i = j\end{cases} \iff U_B \sim \cholinvwishart{m}(n, U_\Omega).
    \end{equation}
\end{theorem}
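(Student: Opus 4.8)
The plan is to start from the joint law of the upper-triangular matrix $Z$ implied by the entrywise distributions and push it forward through the change of variables $U_B = Z^{-1} U_\Omega$, matching the result against \Cref{cholinvwishartdensity}. Since $U_\Omega$ is upper triangular, $Z = U_\Omega U_B^{-1}$ is upper triangular as well, so the map is a diffeomorphism between spaces of upper-triangular matrices with positive diagonal; establishing the density identity in this way proves both directions of the stated equivalence at once. First I would write the joint density of $Z$ with respect to $\dmeasure{Z}$: the strictly upper entries contribute independent standard-normal factors $\exp(-z_{ij}^2/2)$, while each diagonal entry contributes a $\chi_{n-m+j}$ factor proportional to $z_{jj}^{n-m+j-1}\exp(-z_{jj}^2/2)$. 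Collecting these and recognizing $\sum_{i\le j} z_{ij}^2 = \norm{Z}^2$ gives $p_Z(Z) \propto \exp(-\tfrac12\norm{Z}^2)\prod_{j=1}^m z_{jj}^{n-m+j-1}$.

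The core of the argument is the Jacobian of $U_B = Z^{-1} U_\Omega$, which I would obtain by composing two elementary maps on the space of upper-triangular matrices. For right multiplication $W \mapsto W U_\Omega$ by a fixed upper-triangular matrix, an exterior-product computation shows $\dmeasure{U_B} = \left(\prod_j (u_\Omega)_{jj}^{\,j}\right)\dmeasure{W}$, and symmetrically left multiplication $X \mapsto BX$ scales the measure by $\prod_j b_{jj}^{\,m+1-j}$. For the inversion $W = Z^{-1}$ I would differentiate $ZW = I$ to get $\dd{W} = -W(\dd{Z})W$, so the Jacobian is the determinant of the linear map $\dd{Z} \mapsto W(\dd{Z})W$ on upper-triangular matrices; writing this as the composition of left and right multiplication by $W$ and multiplying the two determinants yields $\dmeasure{W} = \left(\prod_j w_{jj}^{\,m+1}\right)\dmeasure{Z} = \left(\prod_j z_{jj}^{-(m+1)}\right)\dmeasure{Z}$. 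This inversion step is the main obstacle, as it requires care to justify treating $-W(\cdot)W$ as a composition with multiplicative determinant and to track that the overall sign drops out under the absolute value.

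Combining the two steps gives $\dmeasure{Z} = \left(\prod_j (u_\Omega)_{jj}^{-j}\, z_{jj}^{\,m+1}\right)\dmeasure{U_B}$, so by the change-of-variables formula $p_{U_B}(U_B) \propto p_Z(Z)\prod_j z_{jj}^{\,m+1}$ after absorbing the constant $\prod_j (u_\Omega)_{jj}^{-j}$. Substituting the density of $Z$ produces $p_{U_B}(U_B) \propto \exp(-\tfrac12\norm{Z}^2)\prod_j z_{jj}^{\,n+j}$. Finally I would substitute $Z = U_\Omega U_B^{-1}$: the exponential becomes $\exp(-\tfrac12\norm{U_\Omega U_B^{-1}}^2)$, and since the diagonal of a product of upper-triangular matrices is the product of the diagonals, $z_{jj} = (u_\Omega)_{jj}/(u_B)_{jj}$, so $\prod_j z_{jj}^{\,n+j}$ contributes $\prod_j (u_B)_{jj}^{-(n+j)}$ up to the constant $\prod_j (u_\Omega)_{jj}^{\,n+j}$. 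This matches \Cref{cholinvwishartdensity} exactly, completing the proof; the hypothesis $n > m-1$ ensures $n-m+j > 0$ for every $j$, so each $\chi_{n-m+j}$ is well defined.
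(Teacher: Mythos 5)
Your proof is correct and follows essentially the same route as the paper's: a change of variables between $Z$ and $U_B$ using the Jacobian relation $\dmeasure{U_B} \propto \det{Z}^{-(m+1)}\dmeasure{Z}$, matched against the density \Cref{cholinvwishartdensity} (you run the computation from $p_Z$ to $p_{U_B}$ where the paper runs it from $p_{U_B}$ to $p_Z$, which is immaterial since the map is a diffeomorphism). The only substantive variation is your derivation of the inversion Jacobian: you factor $\dd{W} = -W(\dd{Z})W$ as a composition of left- and right-multiplication by $W$ on upper-triangular matrices and multiply their determinants $\prod_j w_{jj}^{m+1-j}$ and $\prod_j w_{jj}^{j}$, whereas the paper's \Cref{thm:triinvjac} obtains the same factor $\prod_j w_{jj}^{m+1}$ by a direct exterior-product computation tracking leading terms; both are valid.
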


To prove this result, we first work out the necessary Jacobian determinants in \Cref{thm:triinvjac} and \Cref{lem:trimuldetjac}.

\begin{theorem}\label{thm:triinvjac}
    Let $T$ be an $m \times m$ invertible triangular matrix and $R=T^{-1}$.
    Then $\dmeasure{T} = \det{R}^{-(m+1)} \dmeasure{R}$.
\end{theorem}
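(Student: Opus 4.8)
The plan is to compute the Jacobian determinant of the inversion map $R \mapsto T = R^{-1}$ restricted to the space of upper triangular matrices, using the exterior-product calculus introduced above. Since the inverse of a triangular matrix is triangular of the same type, the lower triangular case is identical up to relabelling, so I would treat the upper triangular case, which is the one needed for the Cholesky factors. First I would differentiate the defining identity $T = R^{-1}$, which by the standard rule for the differential of a matrix inverse gives the linear relation
\[
    \dd{T} = -T (\dd{R}) T.
\]
Reading off the $(i,j)$ entry for $i \le j$ yields
\[
    \dd{t_{ij}} = -\sum_{k \le l} t_{ik} t_{lj} \, \dd{r_{kl}},
\]
and because $T$ is upper triangular the only surviving terms are those with $k \ge i$ and $l \le j$.

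The key step, and the only real obstacle, is to recognise that this linear map is triangular with respect to a suitable ordering of the index pairs, so that its determinant reduces to the product of the diagonal coefficients rather than a genuine $\tfrac{1}{2}m(m+1)$-dimensional determinant. I would order the pairs $(i,j)$ by their \emph{diagonal distance} $j - i$. The surviving constraints $k \ge i$ and $l \le j$ force $l - k \le j - i$, so $\dd{t_{ij}}$ involves only those $\dd{r_{kl}}$ whose distance does not exceed that of $(i,j)$; moreover equality of the distances forces $(k,l) = (i,j)$. Listing the variables in order of increasing distance therefore makes the coefficient matrix triangular, with diagonal entry $-t_{ii} t_{jj}$ for the pair $(i,j)$.

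The Jacobian determinant is thus $\prod_{i \le j} (-t_{ii} t_{jj})$. Passing to absolute values (as the change-of-variables formula and the notation $\det{\cdot}$ require) removes the signs, and a short bookkeeping count shows each diagonal factor $t_{ss}$ appears with total exponent $(m - s + 1) + s = m+1$, giving
\[
    \prod_{i \le j} \left| t_{ii} t_{jj} \right| = \left(\prod_{s=1}^m \left| t_{ss} \right|\right)^{m+1} = \det{T}^{m+1}.
\]
Finally, since $T = R^{-1}$ gives $\det{T} = \det{R}^{-1}$, I would conclude that $\dmeasure{T} = \det{T}^{m+1}\, \dmeasure{R} = \det{R}^{-(m+1)}\, \dmeasure{R}$, as claimed. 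Once the diagonal-distance ordering is in place, everything after the triangularity observation is routine exponent counting.
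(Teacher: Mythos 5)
Your proposal is correct and follows essentially the same route as the paper: differentiate $R T = I_m$ to get $\dd{T} = -T \dd{R} T$, observe that the induced linear map on the independent entries is triangular with diagonal coefficients $-t_{ii}t_{jj}$, and count exponents to obtain $\det{T}^{m+1} = \det{R}^{-(m+1)}$. The only difference is that you order the index pairs by the diagonal distance $j-i$ while the paper wedges the $\dd{t}_{ij}$ in column-major order; your ordering is the one under which the coefficient matrix is genuinely triangular (e.g.\ $\dd{t}_{12}$ also picks up a $t_{12}t_{22}\,\dd{r}_{22}$ term, which is not ``earlier'' in column-major order), so your bookkeeping is, if anything, the more careful version of the same argument.
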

\begin{proof}
    We first differentiate the constraint $R T = I_m$:
    \[
        \dd{R} T + R \dd{T} = 0 \implies \dd{T} = -T \dd{R}T
    \]
    Now we inspect the elements of $\dd{T}$:
    \begin{align*}
        -\dd{t}_{ij} & = \sum_{k=1}^m \sum_{l=1}^m t_{ik} \dd{r}_{kl} t_{lj} = \sum_{k=1}^i \sum_{l=k}^j t_{ik} t_{lj} \dd{r}_{kl}                                         \\
        -\dd{t}_{11} & = t_{11} t_{11} \dd{r}_{11}                                                                                                                         \\
        -\dd{t}_{12} & = t_{11} t_{22} \dd{r}_{12} + \textcolor{medium-gray}{t_{11} t_{12} \dd{r}_{11}}                                                                    \\
        -\dd{t}_{22} & = t_{22} t_{22} \dd{r}_{22} + \textcolor{medium-gray}{t_{21} t_{22} \dd{r}_{12} + t_{21} t_{12} \dd{r}_{11}}                                        \\
        \vdots \ \   &                                                                                                                                                     \\
        -\dd{t}_{ij} & = t_{ii} t_{jj}\dd{r}_{ij} + \textcolor{medium-gray}{\text{terms involving } \dd{r}_{kl} \text{ for } l < j \text{ or } (l = j \text{ and } k < i)}
    \end{align*}
    By the antisymmetric property of the exterior product, $\dd{t}_{ij} \wedge \dd{t}_{ij} = 0$, so if we take the exterior product of all expressions for $\dd{t}_{ij}$ given above starting with $\dd{t}_{11}$ and working down, only the leading terms $t_{ii} t_{jj} \dd{r}_{ij}$ remain:
    \begin{align*}
        \dmeasure{T} & = \bigwedge_{j=1}^m \bigwedge_{i=1}^j \dd{t}_{ij} = \bigwedge_{j=1}^m \bigwedge_{i=1}^j t_{ii} t_{jj} \dd{r}_{ij} = \left(\prod_{j=1}^m t_{jj} \prod_{i=1}^j t_{ii} \right) \dmeasure{R} \\
                     & =\left(\prod_{j=1}^m t_{jj}^{m+1}\right) \dmeasure{R} = \det{T}^{m+1} \dmeasure{R} = \det{R}^{-m-1} \dmeasure{R}
    \end{align*}
\end{proof}

\begin{lemma}\label{lem:trimuldetjac}
    Let $Y = C X$, where $Y$, $C$, and $X$ are $m \times m$ triangular matrices, and $C$ is constant.
    Then $\dmeasure{Y} \propto \dmeasure{X}$.
\end{lemma}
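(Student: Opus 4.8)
The plan is to exploit that, with $C$ held constant, the assignment $X \mapsto Y = CX$ is a \emph{linear} map on the vector space of $m \times m$ upper triangular matrices (a product of upper triangular matrices is again upper triangular, so the map preserves this space). The Jacobian of a linear map is the constant matrix of the map itself, so $\det{J}$ does not depend on $X$; the change of variables formula then gives $\dmeasure{Y} = \det{J}\,\dmeasure{X}$ with $\det{J}$ constant, which is exactly the claimed proportionality. I would treat the upper triangular case explicitly, the lower triangular case being entirely analogous.

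To stay in the exterior-product style of \Cref{thm:triinvjac} and to exhibit the constant, I would differentiate $Y = CX$, using that $C$ is constant, to obtain $\dd{Y} = C\,\dd{X}$, and then read off the entries for $i \le j$:
\[
    \dd{y}_{ij} = \sum_{k=i}^{j} c_{ik}\,\dd{x}_{kj} = c_{ii}\,\dd{x}_{ij} + \sum_{k=i+1}^{j} c_{ik}\,\dd{x}_{kj},
\]
where the range $i \le k \le j$ follows from $C$ and $X$ being upper triangular. The leading term is $c_{ii}\,\dd{x}_{ij}$, and every subleading term involves $\dd{x}_{kj}$ with $k > i$ in the same column $j$.

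Next I would wedge these differentials together. As in the proof of \Cref{thm:triinvjac}, the antisymmetry $\dd{x}_{kj} \wedge \dd{x}_{kj} = 0$ annihilates all subleading terms, provided the factors are ordered so that each $\dd{x}_{kj}$ appearing in the subleading part of $\dd{y}_{ij}$ has already been introduced. Since the coupling is now \emph{within} a column (rows $k > i$) rather than across earlier columns, the natural ordering of $\dmeasure{X}$ must be adjusted to process each column from its diagonal entry upward, i.e. $i = j, j-1, \ldots, 1$; equivalently, because I only need proportionality, I may reorder the factors freely, as reordering changes only the overall sign. Up to that sign, the surviving product of leading coefficients is $\prod_{j=1}^{m}\prod_{i=1}^{j} c_{ii} = \prod_{i=1}^{m} c_{ii}^{\,m+1-i}$, giving
\[
    \dmeasure{Y} = \left(\prod_{i=1}^{m} c_{ii}^{\,m+1-i}\right)\dmeasure{X} \propto \dmeasure{X}.
\]

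The main obstacle is purely bookkeeping: confirming that the subleading terms genuinely vanish under the wedge, which hinges on the within-column ordering just described. Conceptually there is nothing deep here, since the linearity observation already settles the proportionality claim; the only care required is to fix the ordering correctly when one also wants the explicit constant.
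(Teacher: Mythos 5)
Your proposal is correct, and its first paragraph is essentially the paper's entire proof: the map $X \mapsto CX$ is linear on the space of triangular matrices, so its Jacobian is the constant matrix of the map itself and the proportionality constant depends only on $C$, which is all the lemma claims. Your additional wedge-product computation of the explicit constant $\prod_{i=1}^{m} c_{ii}^{\,m+1-i}$ (with the within-column, diagonal-upward ordering) is also correct, but it goes beyond both what the paper does and what the lemma requires.
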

\begin{proof}
    Since the operation is linear, its differential is the same, and its Jacobian depends only on the constant matrix $C$.
\end{proof}

\begin{corollary}
    Let $U_B = Z^{-1} U_\Omega$.
    Then \Cref{thm:triinvjac} and \Cref{lem:trimuldetjac} imply
    \begin{equation}\label{eq:trildivdetjac}
        \dmeasure{U_B} \propto \det{Z}^{-m-1} \dmeasure{Z}.
    \end{equation}
\end{corollary}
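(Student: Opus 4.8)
The plan is to factor the map $Z \mapsto U_B = Z^{-1} U_\Omega$ through the intermediate variable $R = Z^{-1}$, so that $U_B = R U_\Omega$, and then chain the Jacobian factors supplied by the two cited results. Since $Z$ is invertible and upper triangular, $R$ is upper triangular as well, and because $U_\Omega$ is upper triangular the product $U_B = R U_\Omega$ is again upper triangular, so both changes of variable stay within the space of upper triangular matrices on which these measures are defined.

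First I would treat the inversion $Z \mapsto R = Z^{-1}$ with \Cref{thm:triinvjac}. Taking $T = Z$ in that theorem gives $\dmeasure{Z} = \det{Z^{-1}}^{-(m+1)} \dmeasure{Z^{-1}}$; using $\det{Z^{-1}} = \det{Z}^{-1}$ and solving for $\dmeasure{Z^{-1}}$ yields
\[
    \dmeasure{R} = \dmeasure{Z^{-1}} = \det{Z}^{-(m+1)} \dmeasure{Z}.
\]
Next I would treat the multiplication $R \mapsto U_B = R U_\Omega$ with \Cref{lem:trimuldetjac}, which says that a linear map given by multiplication by a constant triangular matrix rescales the measure by a constant factor. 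Chaining the two steps gives
\[
    \dmeasure{U_B} \propto \dmeasure{R} = \det{Z}^{-(m+1)} \dmeasure{Z} = \det{Z}^{-m-1} \dmeasure{Z},
\]
as claimed.

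The one point that needs care is that \Cref{lem:trimuldetjac} is stated for left multiplication, $Y = CX$, whereas here the constant factor $U_\Omega$ multiplies $R$ on the right. I would resolve this by observing that the lemma's justification applies unchanged: $R \mapsto R U_\Omega$ is still a linear map on upper triangular matrices whose Jacobian depends only on the constant $U_\Omega$, so $\dmeasure{U_B} \propto \dmeasure{R}$. Because the corollary only asserts proportionality, the exact value of this right-multiplication Jacobian never has to be computed, and the whole argument reduces to the single substitution from \Cref{thm:triinvjac} together with the identity $\det{Z^{-1}} = \det{Z}^{-1}$.
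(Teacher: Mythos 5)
Your proof is correct and follows exactly the route the paper intends: apply \Cref{thm:triinvjac} to the inversion $Z \mapsto R = Z^{-1}$ (obtaining $\dmeasure{R} = \det{Z}^{-(m+1)}\dmeasure{Z}$) and \Cref{lem:trimuldetjac} to the multiplication $R \mapsto R\, U_\Omega$, chaining the two Jacobian factors. Your explicit observation that \Cref{lem:trimuldetjac} is stated for left multiplication while the corollary needs right multiplication by the constant $U_\Omega$---and that its one-line justification (a linear map whose Jacobian depends only on the constant factor) applies verbatim to that case---addresses a detail the paper leaves implicit, and you handle it correctly.
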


\begin{proof}[\textbf{Proof of \Cref{thm:invwishartbartlett}}]
    First we note that $(u_\Omega)_{jj} = z_{jj} (u_B)_{jj}$.
    Then we use the change of variables formula with the density of \Cref{cholinvwishartdensity} and Jacobian determinant of \Cref{eq:trildivdetjac} and simplify:
    \begin{align*}
        p_Z(Z \mid n) & = p_{U_B}(Z^{-1} U_\Omega \mid n, U_\Omega) \det{Z}^{-m-1}                                                                         \\
                      & \propto \exp\left(-\frac{1}{2}\lVert Z \rVert^2\right) \det{Z}^{-m-1} \prod_{j=1}^m (u_B)_{jj}^{-(n+j)}                            \\
                      & = \exp\left(-\frac{1}{2}\lVert Z \rVert^2\right) \prod_{j=1}^m z_{jj}^{-m-1} z_{jj}^{n+j} (u_\Omega)_{jj}^{n+j}                    \\
                      & \propto \left(\prod_{j=1}^m z_{jj}^{n-m+j-1} e^{-z_{jj}^2/2}\right) \left(\prod_{j=2}^m \prod_{i=1}^{j-1} e^{-z_{ij}^2/2}\right)   \\
                      & \propto \left(\prod_{j=1}^m \chi_{n-m+j}(z_{jj})\right) \left(\prod_{j=2}^m \prod_{i=1}^{j-1} \mathcal{N}(z_{ij} \mid 0, 1)\right)
    \end{align*}
\end{proof}

\Cref{thm:invwishartbartlett} motivates \Cref{alg:rinvwishartchol} for generating a random inverse-Wishart Cholesky factor, which we then use in \Cref{alg:rinvwishartdirect} for generating random inverse-Wishart matrices.

\begin{algorithm}
    \caption{Generate a Cholesky-inverse-Wishart matrix.}\label{alg:rinvwishartchol}
    \begin{algorithmic}[1]
        \Function{rinvwishart\_chol}{$m,n,U_\Omega$}\Comment{$\cholinvwishart{m}(n, U_\Omega)$}
        \State $Z \gets \zeromat{m}{m}$
        \For{$j\in 1\ldots m$}
        \For{$i\in 1\ldots j-1$}
        \State Generate $z_{ij} \sim \mathcal{N}(0, 1)$
        \EndFor
        \State Generate $z_{jj} \sim \chi_{n-m+j}$
        \EndFor
        \State $C \gets Z^{-1}$ \Comment{$C \sim \cholinvwishart{m}(n, I_m)$}
        \State $U_B \gets C U_\Omega$
        \State \textbf{return} $U_B$
        \EndFunction
    \end{algorithmic}
\end{algorithm}

\begin{algorithm}
    \caption{Generate an inverse-Wishart-distributed matrix.}\label{alg:rinvwishartdirect}
    \begin{algorithmic}[1]
        \Function{rinvwishart\_direct}{$m$,$n$,$S$,$\text{iscov}$,$\text{ischolU}$,$\text{retcholU}$}
        \State $U_\Omega \gets \textproc{cholesky\_upper}(S, \text{iscov}, \text{ischolU})$
        \State Generate $U_B \sim \textproc{rinvwishart\_chol}(m,n,U_\Omega)$
        \If {$\text{retcholU}$}
        \State \textbf{return} $U_B$
        \Else
        \State $B \gets U_B^\top U_B$
        \State \textbf{return} $B$
        \EndIf
        \EndFunction
    \end{algorithmic}
\end{algorithm}

Reference implementations of all algorithms are available at \url{https://github.com/sethaxen/rand_invwishart}.

\section{Comparison of algorithms}\label{algcomp}

\begin{table}[h]
    \centering
    \small 
    \begin{tabular}{lccc|ccc}
        \toprule
        \multirow{2}{*}{\parbox{1.5cm}{Parameter                               \\ ($S$)}} & \multicolumn{3}{c}{\Cref{alg:rinvwishartindirect}} & \multicolumn{3}{c}{\Cref{alg:rinvwishartdirect}} \\
        \cmidrule(lr){2-4} \cmidrule(lr){5-7}
                   & TRTRI  & TRMM   & POTRF     & TRTRI  & TRMM      & POTRF  \\
        \midrule
        $\Sigma$   & \bf{1} & \bf{2} & \bf{1(2)} & 2      & 3(2)      & 2      \\
        $U_\Sigma$ & \bf{1} & \bf{2} & \bf{0(1)} & 2      & 3(2)      & 1      \\
        $\Omega$   & 2      & 3      & 2(3)      & \bf{1} & \bf{2(1)} & \bf{1} \\
        $U_\Omega$ & 2      & 3      & 1(2)      & \bf{1} & \bf{2(1)} & \bf{0} \\
        \bottomrule
    \end{tabular}
    \caption{\small Number of $\bigo(m^3)$ BLAS and LAPACK operations required by each algorithm for each parameterization.
        The corresponding count to generate just the Cholesky factor of the random matrix, when different, is given in parentheses.
        For each parameterization the counts for the most efficient algorithm are bolded.
        \\
        TRMM (BLAS): multiplication by a triangular matrix\\
        TRTRI (LAPACK): inversion of a triangular matrix\\
        POTRF (LAPACK): Cholesky factorization
    }
    \label{tab:alg_comparison}
\end{table}

Both \Cref{alg:rwishartchol} and \Cref{alg:rinvwishartchol} require the same number and type of PRNG calls, so \Cref{alg:rinvwishartindirect} and \Cref{alg:rinvwishartdirect} primarily differ in the number of $\bigo(m^2)$ allocations (which can be eliminated by accepting and reusing storage) and in the number of $\bigo(m^3)$ linear algebraic operations.
\Cref{tab:alg_comparison} gives the number of $\bigo(m^3)$ operations required by each algorithm, depending on the parameterization.

If the parameter is specified as $\Sigma$ or $U_\Sigma$, then \Cref{alg:rinvwishartindirect} is the most efficient, while \Cref{alg:rinvwishartdirect} is more efficient if the parameter is specified as $\Omega$ or $U_\Omega$.
The difference is more pronounced when generating random Cholesky factors, since for the standard algorithm, more work is necessary to compute the factor, while for the new algorithm the Cholesky factor is already available.

We therefore recommend \Cref{alg:rinvwishartindirect} if the inverse-Wishart distribution is parameterized by $\Sigma$ or $U_\Sigma$ and \Cref{alg:rinvwishartdirect} if parameterized by $\Omega$ or $U_\Omega$.

\subsection*{Acknowledgments}

Seth Axen is a member of the Cluster of Excellence Machine Learning: New Perspectives for Science, University of Tübingen and is funded by the Deutsche Forschungsgemeinschaft (DFG, German Research Foundation) under Germany’s Excellence Strategy – EXC number 2064/1 – Project number 390727645.

\printbibliography

\end{document}